\def\BibTeX{{\rm B\kern-.05em{\sc i\kern-.025em b}\kern-.08em
    T\kern-.1667em\lower.7ex\hbox{E}\kern-.125emX}}
\newcommand*{\alphabet}{abcdefghijklmnopqrstuvwxyzABCDEFGHIJKLMNOPQRST123456789}
\renewcommand*{\vec}[1]{
\IfSubStr{\alphabet}{#1}{
\ensuremath{\mbf{\MakeLowercase{#1}}}
}{
\ensuremath{\bm{\MakeLowercase{#1}}}
}
}
\newcommand*{\mat}[1]{
\IfSubStr{\alphabet}{#1}{
\ensuremath{\mbf{\MakeUppercase{#1}}}
}{
\ensuremath{\bm{\MakeUppercase{#1}}}
}
}
\newcommand*{\vh}{{\vec{h}}}
\newcommand*{\vv}{{\vec{v}}}
\newcommand*{\vx}{{\vec{x}}}
\newcommand*{\E}{\mathbb E}
\newcommand*{\K}{\mathbb K}
\newtheorem{theorem}{Theorem}[section]
\newtheorem{lemma}[theorem]{Lemma}
\newcommand*{\mbf}{\mathbf}
\newcommand*{\tr}{\mathrm{Tr}}
\NewDocumentCommand{\norm}{sm}{\IfBooleanTF{#1}{\|#2\|}{\left\|#2\right\|}}
\renewcommand*{\epsilon}{\varepsilon}
\renewcommand*{\hat}{\widehat}
\renewcommand*{\tilde}{\widetilde}
\renewcommand*{\bar}{\overline}
\NewDocumentCommand{\etal}{s}{\IfBooleanTF{#1}{\textit{et al}}{\textit{et al}. }}
\newcommand*{\diag}[1]{\text{diag}\left(#1\right)}
\newcommand*{\cnormdist}{\mathcal{CN}}
\newcommand*{\subc}{k}     
\newcommand*{\subcm}{m}    
\newcommand*{\subcset}{\K} 
\newcommand*{\ted}{\text{d}}
\newcommand*{\tee}{\text{e}}
\newcommand*{\ter}{\text{r}}
\newcommand*{\tes}{\text{s}}
\newcommand*{\tet}{\text{t}}
\newcommand*{\ten}{\text{n}}
\newcommand*{\tesr}{\text{sr}}
\newcommand*{\terr}{\text{rr}}
\newcommand*{\tesd}{\text{sd}}
\newcommand*{\terd}{\text{rd}}
\newcommand*{\Sigchnlsr}{\sigma_{\tee,\tesr}} 
\newcommand*{\Sigchnlrr}{\sigma_{\tee,\terr}}
\newcommand*{\Sigchnlsd}{\sigma_{\tee,\tesd}}
\newcommand*{\Sigchnlrd}{\sigma_{\tee,\terd}}
\newcommand*{\Signr}{\sigma_{\ten,\ter}} 
\newcommand*{\Signd}{\sigma_{\ten,\ted}}
\newcommand*{\TxdisM}{\mathbf{0}_{N_{\mathrm{S}}}}
\newcommand*{\TxdisMat}{\mathbf{{\Theta}}_{\tet,\tes}}
\begin{document}
%
\title{Rate Splitting for Massive MIMO Multi-carrier system using Full Duplex Decode and Forward Relay with Hardware Impairments \\
}

\author{\IEEEauthorblockN{Vimal Radhakrishnan, Omid Taghizadeh, and Rudolf Mathar}
\IEEEauthorblockA{\textit{Institute for Theoretical Information Technology, RWTH University Aachen,52074 Aachen, Germany} \\
\{radhakrishnan, taghizadeh, mathar\}@ti.rwth-aachen.de}
}

\IEEEpeerreviewmaketitle

\maketitle

\begin{abstract}
In this paper, we address the power allocation problem for a decode and forward relay (DF) system, where a massive multiple-input-multiple-output (mMIMO) multi-carrier (MC) base station (BS) node  communicates with a MC single antenna node directly and also through the single antenna full duplex (FD) MC relay, using rate splitting (RS) approach. Successive interference cancellation approach is adopted at the destination. We consider orthogonal frequency division multiplexing (OFDM) as our MC strategy. We take into account the impact of hardware distortions resulting in residual self-interference and inter-carrier leakage (ICL), and also imperfect channel state information (CSI). We formulate a joint sub-carrier and power allocation problem to maximize the total sum rate. An iterative optimization method is proposed, which follows successive inner approximation (SIA) framework to reach the convergence point that satisfies the Karush–Kuhn–Tucker (KKT) conditions. Numerical results show the significance of distortion-aware design for such systems, and also the significant gain in terms of sum rate compared to its half duplex (HD) and also non-rate splitting scheme.
\end{abstract}

\begin{IEEEkeywords}
Massive MIMO, Full duplex, Multi-carrier, Power allocation, Imperfect CSI, rate splitting
\end{IEEEkeywords}

\section{Introduction}
In 5G systems, full duplex (FD) and massive MIMO (mMIMO) are considered as two promising technologies to overcome capacity crunch and spectrum scarcity.
Simultaneous transmission and reception at the same frequency-time channel in FD systems improve the spectral efficiency compared to current half duplex wireless systems \cite{Choi10}, where the transmission and reception are separated either in time or frequency.
The main challenge for such systems is to suppress the strong interference received from its own transmitter.
Recently, some studies have been conducted in this regard \cite{ASaPScDGuDWBlSRaRWi14, DBhSKa14,GJGFHGJCoTRiRWi2019} and  various techniques \cite{TRiRWi12,EvAsAs14 , SimCKCKC16, NSmMDjAAz17} were developed in order to mitigate this self-interference.
Removing the known transmitted signal from the received ones is the key idea for self-interference cancellation.
This is challenging due to the imperfect transmitter/receiver chain components, aging of the components, imperfect knowledge of the self interference channel, etc. 

A large number of antennas are equipped in an mMIMO communication system which improve the spectral efficiency and energy efficiency using large spatial diversity as well as beamforming techniques.
On the other hand, the hardware cost for such a system becomes expensive due to the large number of antenna arrays requirements.
In order to reduce the cost, the inexpensive or less efficient transmitter/receiver chain components such as low-resolution ADC, DAC \cite{Liu2018}, low-cost power amplifiers are preferred.
More hardware distortions are introduced to the system because of these less efficient components and their aging over time.
In particular, for an FD MC system these non-linear hardware distortions lead to ICL.
Even if one of the subcarriers is employed with a high-power transmission will introduce a higher residual self-interference in all of the subcarrier channels.
So, in an FD mMIMO system, it is important to have a distortion-aware design which considers the impact of distortions due to the hardware impairments.      

FD relay has gained its attention for its improved spectral efficiency compared to the HD case due to the simultaneous transmission and reception capability \cite{HJuEOhDHo09,TRiSWeRWiJHa09,DWKNgESLoRSc12,VRaOTaRMa18,XXiDZhKXuWMaYXu15} and also reduces the overall latency of the relay communication \cite{AKaSGhASe17}.
The resource allocation problem for an FD MC system is addressed in \cite{NLiYLiMPeWWa16, YSuDWKNgZDiRSc17, ACCiKRiYRoTRa15}, where single antenna transceivers without hardware impairments are considered.
In \cite{OTaVRaACCiRMaLLa18,VRaOTaRMa18}, an alternating quadratic convex program is proposed to solve the resource allocation problem of FD MC MIMO transceiver system with hardware distortions taken in to account.
The resouce allocation problem for FD mMIMO communication system have been studied in \cite{AShKWoMReGZhKAHaJTa17, ChLYFRiJiHRoBLeViCM18,WXiXXiYXuKXuYWa17, XXiDZhKXuWMaYXu15,VRaOTaRMa2019,LLiJHeLYaZHaMPaWChHZhXLi19}. 
In \cite{WXiXXiYXuKXuYWa17, XXiDZhKXuWMaYXu15}, the resouce allocation is addresed FD massive MIMO relay with consideration of hardware distortion as well as imperfect CSI, but for a single carrier system.
For the above works on FD mMIMO relay system, the direct link is not considered. In \cite{VRaOTaRMa18}, the the direct link is considered as interference.  
In order to make use of the direct link to improve the spectral efficiency using RS can be implemented with an successive interference cancellation techniques at the receiver.
RS approaches are used in different scenarios to improve the overall system performance \cite{CHaYWuBCl15,APaTRa17,APaTRa18}.


In this paper, we investigate a downlink communication between an mMIMO transceiver base-station and an HD single antenna user node, where the BS node uses RS approach to communicate to the user through direct link as well as relay link.
The relay is a single antenna FD node and the single antenna destination node uses successive interference cancellation techniques to process the received signal.
We take into account the residual interference due to the impact of hardware distortions, ICL and imperfect CSI.
In Section \ref{SM}, we model the operation of an OFDM relay communication system and formulate the impact of imperfect CSI as well as the impact of hardware distortions.
In Section \ref{OP}, we formulate an optimization problem for joint subcarrier and power allocation to maximize the system sum-rate, which belongs into the class of smooth difference-of-convex (DC) optimization problems.
We propose an iterative optimization solution using SIA framework, which converges to a point that satisfies KKT conditions.
In Section \ref{SR}, using numerical simulations, we evaluate the performance of our proposed algorithms.
It is observed that, for high SNR scenarios, rate splitting approach performs better compared to non-rate splitting, and half duplex schemes. In section \ref{CON}, we summarize our main results. 
       
\subsection{Mathematical Notation}

Throughout this paper, we denote the vectors and matrices by lower-case and upper-case bold letters, respectively.
We use $ \mathbb{E} \{ .\}$, $\text{Tr}(.)$, $(.)^{-1}$, $ (.)^{*} $, $(.)^{T}$, and $(.)^{H}$ for mathematical expectation, trace, inverse, conjugate, transpose, and Hermitian transpose, respectively.
We use $\text{diag}(.)$ for the diag operator, which returns a diagonal matrix by setting off-diagonal elements to zero.
We denote an all-zero matrix of size $ m \times n $ by $\mathbf{0}_{ m \times n} $.
We represent the Euclidean norm as $\| . \|_2 $.
We denote the set of real, positive real, and complex numbers as $ \mathbb{R}$, $ \mathbb{R}^{+} $, and $ \mathbb{C} $ respectively. We use $|.|$ for the cardinality of a set.

\section{System Model} \label{SM}
We consider a downlink communication between an mMIMO FD BS with a FD single antenna user node through a direct channel and also through a relay channel.
The BS uses RS approach to communicate with the destination directly, and also using a single antenna FD relay.
In other words, two messages $s_{d,d}$ and $s_{d,r}$ destined to a user are precoded separately and  superimposed with different power levels, and are then transmitted to destination simultaneously through a direct link (DL) and relay link (RL) respectively.
In practical scenarios, the relay can be considered as an inactive user node which can be used to relay information to another user.
Thereby, the overall spectral efficiency of the system can be increased.
Since, we consider the DF relay to be FD, it can receive and transmit the signal simultaneously with a small processing delay.
In this work, we assume the processing delay to be negligible.

 Let $N_{\mathrm{BS}}$ be the number of transmit antennas at the BS node. We denote the index set of all subcarriers by $ \mathbb{K}$. 
  Furthermore, $\mathbf{h}_{sr}^\subc $ and $\mathbf{h}_{sd}^\subc \in \mathbb{C}^{1 \times N_{\mathrm{BS}}}  $ represent the $k$-th subcarrier channel from the BS to relay and destination, respectively.
 The self interference channel of the relay is denoted by $h_{rr}^\subc$. 

In this work, we assume all the channels are constant for each frame, frequency-flat in each subcarrier and only the imperfect CSI is known. We consider similar channel error model used in \cite{WaJiPaDa09, ACCiYRoYH14}, where the true channel can be decomposed into the estimated channel and estimation error. The channel error model can be expressed as       
 \begin{equation}
 \begin{aligned}
\mathbf{h}_{\mathcal{X}}^\subc  & =   \hat{\mathbf{h}}_{\mathcal{X}}^\subc  + \tilde{\mathbf{h}}_{\mathcal{X}}^\subc , \; \; \hat{\mathbf{h}}_{\mathcal{X}}^\subc \perp \tilde{\mathbf{h}}_{\mathcal{X}}^\subc, \; \;  \forall  \mathcal{X} \in \{\tesr,\terd,\tesd, \terr\}, \forall k \in \mathbb{K}, \\
 \end{aligned}
\end{equation}
where $ \hat{\vh}_{\mathcal{X}}^\subc $ and $ \tilde{\vh}_{\mathcal{X}}^\subc $ represents the estimated channel and channel estimation error for the $\subc$ subcarrier. The entries of channel estimation error $ \tilde{\mathbf{h}}_{\mathcal{X}}^\subc $ are independent and identically distributed (i.i.d.) complex Gaussian with zero mean and variance $(\sigma_{e,\mathcal{X}}^\subc)^2$.  We assume the estimated channel and estimation error become statistically uncorrelated, for example, by considering the minimum mean square error (MMSE) channel estimation strategy at the relay and destination. 

The source symbol from the BS to the user using $k$-th subcarrier through DL can be represented as $s_{d,k}^{\mathrm{DL}} \in \mathbb{C}^1$, while $s_{d,k}^{\mathrm{RL}} \in \mathbb{C}^1 $ denotes the source symbol from the BS through RL using $k$-th subcarrier.
We assume the symbols are i.i.d. with unit power, i.e. $ \E \{ s_{d,k}^{\mathrm{DL}} (s_{d,k}^{\mathrm{DL}})^* \} = 1$ and $ \E \{ s_{d,k}^{\mathrm{RL}} (s_{d,k}^{\mathrm{RL}})^* \} = 1$. Let  $ \mathbf{v}_{s,r}^\subc = \tilde{\mathbf{v}}_{s,r}^\subc \sqrt{p_{sr,k}} $ and $ \mathbf{v}_{s,d}^\subc = \tilde{\mathbf{v}}_{s,d}^\subc \sqrt{p_{sd,k}} $ represent the transmit precoders at the BS for the destination and relay, respectively, $\tilde{\mathbf{v}}_{s,r}^\subc$ and  $\tilde{\mathbf{v}}_{s,d}^\subc$ denote the normalised transmit precoders.
The transmit power dedicated to the relay and destination nodes are $p_{sr,k}$ and $p_{sd,k}$, respectively.
The total available transmit power at the source and relay can be represented as $P_s$ and $P_r$, respectively.
The transmit signal from the source can be written as
\begin{equation}\label{transmitsignalsource}
\begin{aligned}
\mathbf{x}_s^\subc= \underset{: \tilde{\mathbf{x}}_s^\subc}{\underbrace{  \tilde{\mathbf{v}}_{sr}^\subc \sqrt{p_{sr,k}} (s_{r,k})+ \tilde{\mathbf{v}}_{sd}^\subc \sqrt{p_{sd,k}} (s_{d,k})}} + \mathbf{e}_{\mathrm{t},s}^ k  , \; \; \forall k \in \mathbb{K},
\end{aligned}
\end{equation}
where $e_{\mathrm{t},s}^ k$  and $\tilde{\mathbf{x}}_s^\subc$ are the transmitter distortion and the intended transmit signal at the source node, respectively. 

Correspondingly, the transmit and received signal at the relay node can be expressed as
\begin{equation}\label{signalrelay}
\begin{aligned}
{x}_r^\subc &= \underset{: \tilde{{x}}_r^\subc}{\underbrace{  \sqrt{p_{rd,k}} \hat{s}_{r,k}}} + \mathbf{e}_{\mathrm{t},r}^ k  , \; \; \forall k \in \mathbb{K}, \\
{y}_r^\subc  &=  \underset{: = \tilde{y}_r^\subc}{\underbrace{\mathbf{h}_{sr}^\subc \mathbf{x}_s^\subc +  {h}_{rr}^\subc x_r^\subc +   n_r^\subc }  } + e_{\mathrm{r},r}^ k,\; \;  \forall k \in \mathbb{K}, 
\end{aligned}
\end{equation}
where  $e_{\mathrm{t},r}^ k$ and ${e}_{\mathrm{r},r}^ k$ are the transmitter and receiver distortion at the relay node, respectively, and $n_r^\subc  \sim \cnormdist \left(0,(\sigma_{n_r}^\subc)^2  \right) $ is the noise at the relay.
The intended transmit and received signal at the relay can be represented as $\tilde{{x}}_r^\subc$ and $ \tilde{y}_r^\subc $, respectively.
The signal, which is obtained after applying SIC to the received signal (removing the known part of the transmitted signal), can be expressed as
\begin{equation}\label{recievesignal_relay}
\begin{aligned}
\bar{y}_r^\subc & = y_r^\subc - \hat{h}_{rr}^\subc \sqrt{p_{rd,k}} \hat{s}_{r,k}, \; \; \forall k \in \mathbb{K}. 
\end{aligned}
\end{equation}
 Furthermore, the received signal at the destination can be obtained as
\begin{equation}\label{signaldest}
\begin{aligned}
{y}_d^\subc  &=  \underset{: = \tilde{y}_d^\subc}{\underbrace{\mathbf{h}_{sd}^\subc \mathbf{x}_s^\subc +  {h}_{rd}^\subc x_r^\subc +   n_d^\subc }  } + e_{\mathrm{r},d}^ k,\; \;  \forall k \in \mathbb{K}, 
\end{aligned}
\end{equation}
where $e_{\mathrm{r},d}^ k$  and $\tilde{y}_d^\subc$ are the receiver distortion and the intended received signal at the destination, respectively.

Based on \cite{WNa05, GSaFMa98,HSuTVATrIBCoGDaMHe08, MDuCDiASa12}, the inaccuracy of hardware components on transmit and receive chain such as ADC and DAC error, noises caused by power amplifiers, AGC and oscillator are jointly modeled for FD MIMO transceiver in \cite{BDaAdRMaDaWBlPSc11,BPDDWBlARMaPSc12}. It has been observed that the distortion terms are proportional to the intensity of the intended signals. In this work, our MC strategy is OFDM. Therefore, we characterize the impact of these hardware distortions in the frequency domain as in \cite{OTaVRaACCiRMaLLa18}:

\begin{lemma} \label{OFDM_Distortion}
Let's define $  \tilde{x}_l^m$ and $\tilde{y}_l^m$ as the intended transmit and receive signal via $m$-th subcarrier at the $l$-th transmit/receive chain. The impact of hardware distortions in the frequency domain is characterized as  
\begin{align}\hspace{-0mm}
& {e}_{\text{t},l}^\subc  \sim \cnormdist \left( 0, \frac{{\tilde{\kappa}}_{l}}{K} \sum_{m \in \mathbb{F}_K}  \E \left\{ \left| \tilde{y}_l^m \right|^2 \right\} \right),\;\; e_{\text{t},l}^\subc \bot \tilde{y}_l^\subc, \;\; e^\subc_{\text{t},l} \bot {e^\subc_{\text{t},{l{}'}}},  \\
& {e}_{\text{r},l}^\subc  \sim \cnormdist \left( 0, \frac{\tilde{\beta}_{l}}{K} \sum_{m \in \mathbb{F}_K}  \E \left\{ \left| \tilde{x}_l^m \right|^2 \right\} \right),\;\; e_{\text{r},l}^\subc \bot \tilde{x}_l^\subc, \;\; e^\subc_{\text{r},l} \bot {e^\subc_{\text{r},{l{}'}}}, 
\end{align} 
transforming the statistical independence, as well as the proportional variance properties from the time domain. Here, $K$ represents the total number of subcarriers. The $\tilde{\kappa}_{l}$ and $\tilde{\beta}_{l}$ correspond to the transmit and receive distortion coefficientsat the $l$-th transmit/receive chain.   
\end{lemma}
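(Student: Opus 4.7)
The plan is to derive the frequency-domain characterization by pushing the standard time-domain hardware-distortion model (as used in the cited references \cite{BDaAdRMaDaWBlPSc11,BPDDWBlARMaPSc12,OTaVRaACCiRMaLLa18}) through the OFDM modulator, which is a unitary DFT. Concretely, in the time domain one models $e_{\mathrm{t},l}[n]$ (resp.\ $e_{\mathrm{r},l}[n]$) as conditionally circularly-symmetric complex Gaussian with instantaneous variance proportional to the instantaneous intended-signal power, independent across samples $n$ and across chains $l$, and uncorrelated with the intended signal at the same chain. The claim is that because the DFT is unitary and linear, all three properties (Gaussianity, proportional variance, statistical orthogonality) carry over to the subcarrier domain, with the instantaneous time-domain power being replaced by the average power summed over all subcarriers.

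First, I would write $e_{\mathrm{t},l}^{\subc} = \frac{1}{\sqrt{K}}\sum_{n=0}^{K-1} e_{\mathrm{t},l}[n]\, e^{-j2\pi \subc n / K}$ and invoke the fact that a unitary linear combination of jointly circularly-symmetric Gaussian samples is again circularly-symmetric Gaussian, yielding the $\cnormdist$ marginal. For the variance, I would take expectations and exploit independence of the time-domain samples to kill cross terms, obtaining
\[
\E\{|e_{\mathrm{t},l}^{\subc}|^2\} \;=\; \frac{1}{K}\sum_{n=0}^{K-1}\E\{|e_{\mathrm{t},l}[n]|^2\} \;=\; \frac{\tilde{\kappa}_l}{K}\sum_{n=0}^{K-1}\E\{|\tilde{y}_l[n]|^2\},
\]
where the second equality uses iterated expectations to replace the conditional variance by $\tilde{\kappa}_l\,\E\{|\tilde{y}_l[n]|^2\}$. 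Parseval's identity for the unitary DFT then converts the time-domain energy sum into the subcarrier-domain sum $\sum_{m\in\mathbb{F}_K}\E\{|\tilde{y}_l^m|^2\}$, which is precisely the variance claimed in the lemma. The receive-distortion case is identical with $(\tilde{\kappa}_l,\tilde{y})$ replaced by $(\tilde{\beta}_l,\tilde{x})$.

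The statistical-orthogonality assertions $e_{\mathrm{t},l}^{\subc}\bot \tilde{y}_l^{\subc}$ and $e_{\mathrm{t},l}^{\subc}\bot e_{\mathrm{t},l'}^{\subc}$ for $l\ne l'$ follow by computing the cross-covariance in the time domain (zero by construction of the distortion model at each sample) and observing that applying the same unitary DFT to both sides preserves zero covariance, which for jointly Gaussian quantities upgrades to independence; chain-wise independence is preserved because separate DFTs are applied to already-independent time-domain sequences. The main obstacle I expect is the Gaussianity claim itself: strictly speaking the time-domain distortions are only \emph{conditionally} Gaussian with signal-dependent variance, so the subcarrier samples are not exactly Gaussian. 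I would resolve this in the standard way for this line of work, namely by invoking a CLT-type argument that, for a moderately large number of subcarriers $K$, each frequency-domain distortion is the sum of $K$ independent contributions and is therefore well-approximated as unconditionally Gaussian with the variance computed above; this is the same approximation taken in \cite{OTaVRaACCiRMaLLa18} and justifies the frequency-flat, signal-power-proportional model on which the rest of the paper's optimization framework rests.
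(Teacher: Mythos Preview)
Your proposal is correct and mirrors the approach the paper relies on: the paper does not give an independent proof but simply refers to the appendix of \cite{OTaVRaACCiRMaLLa18}, where exactly this unitary-DFT/Parseval argument (Gaussianity via linear combination, variance via energy preservation, orthogonality via zero time-domain cross-covariance) is carried out. Your caveat about conditional versus unconditional Gaussianity and the CLT-type justification is also precisely the approximation adopted in that reference, so there is no gap between your sketch and the cited derivation.
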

\begin{proof}
Please refer to the appendix of \cite{OTaVRaACCiRMaLLa18}.
\end{proof}

In this work, we use a similar model for the transmit and receiver distortions as that of \cite{OTaVRaACCiRMaLLa18,VRaOTaRMa18}. The statistics of the distortion terms can be obtained as
 \begin{equation}\label{TXdisUSER}
\mathbf{e}_{\tet,\tes}^\subc  \sim \cnormdist \left( \TxdisM , \frac{1}{K} \mathbf{\tilde{\Theta}}_{\tet,\tes} \underset{\subc \in \subcset}{\sum}  \diag {\E \{ \tilde{\vx}_\tes^\subc (\tilde{\vx}_\tes^\subc)^H \} } \right), 
\end{equation}
\begin{equation}\label{TXdisBS}
\mathbf{e}_{\tet,\ter}^\subc  \sim \cnormdist \left(0, \frac{\tilde{\kappa}_\ter}{K}  \underset{\subc \in \subcset}{\sum}  \left( \E \{ \tilde{x}_\ter^\subc (\tilde{x}_\ter^\subc)^H \} \right) \right), 
\end{equation}
 \begin{equation}\label{RXdisUSER}
\mathbf{e}_{\ter,\ter}^\subc  \sim \cnormdist  \left( 0, \frac{\tilde{\beta}_\ter}{K}  \underset{\subc \in \subcset}{\sum} \left( \E \{ \tilde{y}_\ter^\subc (\tilde{y}_\ter^\subc)^H \} \right) \right), 
\end{equation}
\begin{equation}\label{RXdisBS}
\mathbf{e}_{\ter,\ted}^\subc  \sim \cnormdist \left( 0, \frac{\tilde{\beta}_{\ted}}{K}  \underset{\subc \in \subcset}{\sum}  \left( \E \{ \tilde{y}_\ted^\subc (\tilde{y}_\ted^\subc)^H \} \right) \right), 
\end{equation}
where $ \tilde{\kappa}_\ter , \tilde{\beta}_\ter $ represents the transmit and receiver distortion coefficient of the relay.
The receiver distortion of the destination is given by $\tilde{\kappa}_{\tes}$. The diagonal matrices $\mathbf{\tilde{\Theta}}_{\tet,\tes}$ consist of transmit distortion coefficients for the corresponding chains at the source node.
Let's define $ \kappa_{\ter} =\frac{\tilde{\kappa}_\ter}{K}$,  ${\beta}_{\ter}= \frac{\tilde{\beta}_{\ter}}{K} $, ${\beta}_{\ted}= \frac{\tilde{\beta}_{\ted}}{K} $, and $ \mathbf{{\Theta}}_{\tet,\tes}= \frac{1}{K} \mathbf{\tilde{\Theta}}_{\tet,\tes}$ for further calculations.
Please refer to the \cite[Section \Romannum{2}.A]{OTaVRaACCiRMaLLa18} for the detailed description of the used distortion model.
The above equations \eqref{TXdisUSER}, \eqref{TXdisBS}, \eqref{RXdisUSER}, and \eqref{RXdisBS} explicitly indicate the impact of the ICL, i.e., the distortion signal variance at each subcarrier is associated to the total distortion power at the corresponding chain.  

By employing Lemma \ref{OFDM_Distortion}, and equations \eqref{TXdisUSER}, \eqref{TXdisBS}, \eqref{RXdisUSER}, and \eqref{RXdisBS} on \eqref{recievesignal_relay}, the covariance of received collective interference-plus-noise signal at the relay can be formulated as
{\small
\begin{equation}\label{cov_r}
\begin{aligned}
\Sigma_\ter^\subc & \approx  \hat{\vh}_{\tesr}^\subc \tilde{\vv}_{\tesd} ^\subc  p_{\tesd}^\subc (\tilde{\vv}_{\tesd} ^\subc)^H (\hat{\vh}_{\tesr}^\subc)^H  + (\Sigchnlsr^\subc )^2 (p_{\tesd}^\subc + p_{\tesr}^\subc)  \\ 
                  & + \hat{h}_\terr^\subc \kappa_\ter \underset{\subcm \in \subcset}{\sum} p_\terd^\subcm (\hat{h}_\terr^\subc)^{*} + (\Sigchnlrr^\subc )^2 \left( \kappa_\ter \underset{\subcm \in \subcset}{\sum} p_\terd^\subcm + p_\terd^\subc \right) \\
                  & + \hat{\vh}_\tesr^\subc \mathbf{{\Theta}}_{\tet,\tes} \underset{\subcm \in \subcset}{\sum} \bigg(  \diag{ \tilde{\vv}_\tesr^\subcm p_{\tesr}^\subcm (\tilde{\vv}_\tesr^\subcm)^H} 
                   + \diag{\tilde{\vv}_\tesd^\subcm p_{\tesd}^\subcm (\tilde{\vv}_\tesd^\subcm)^H} \bigg) \\ 
                  & (\hat{\vh}_\tesr^\subc)^H + (\Sigchnlsr^\subc )^2 \tr \bigg( \mathbf{{\Theta}}_{\tet,\tes} \underset{\subcm \in \subcset}{\sum} \bigg(  \diag{ \tilde{\vv}_\tesr^\subcm p_{\tesr}^\subcm (\tilde{\vv}_\tesr^\subcm)^H} \\ 
                  &+ \diag{\tilde{\vv}_\tesd^\subcm p_{\tesd}^\subcm (\tilde{\vv}_\tesd^\subcm)^H} \bigg) \bigg) + \beta_\ter \underset{\subcm \in \subcset}{\sum} \bigg( \hat{\vh}_{\tesr}^\subcm \tilde{\vv}_{\tesr} ^\subcm  p_{\tesr}^\subcm (\tilde{\vv}_{\tesr} ^\subcm)^H (\hat{\vh}_{\tesr}^\subcm)^H \\ 
                  & +  \hat{\vh}_{\tesr}^\subcm \tilde{\vv}_{\tesd} ^\subcm  p_{\tesd}^\subcm (\tilde{\vv}_{\tesd} ^\subcm)^H (\hat{\vh}_{\tesr}^\subcm)^H + (\Sigchnlsr^\subcm )^2  \tr \bigg(  \tilde{\vv}_{\tesr} ^\subcm  p_{\tesr}^\subcm (\tilde{\vv}_{\tesr} ^\subcm)^H   \\ 
                  & +   \tilde{\vv}_{\tesd} ^\subcm  p_{\tesd}^\subcm (\tilde{\vv}_{\tesd} ^\subcm)^H  \bigg)  + \hat{h}_\terr^\subcm p_\terd^\subcm (\hat{h}_\terr^\subcm)^{*} + (\Sigchnlrr^\subcm )^2   p_\terd^\subcm + (\Signr^\subcm)^2 \bigg) \\
                  & + (\Signr^\subc )^2.
\end{aligned}
\end{equation}
}
Since the transmit and receive distortion coefficients $\tilde{\kappa}$ and $ \tilde{\beta} $ lie within the range of $0$ and $1$ and mostly have very small values, the higher order terms of the transmit and receive distortion are ignored.
In the above equation \eqref{cov_r}, the first two terms correspond to the co-channel interference.
The next two terms relate to the self-interference at the relay.
The fifth and sixth terms correspond to the transmit distortion at the source.
The next term represents the receiver distortion at the relay.
The last term is the thermal noise at the relay.
 
At the destination node, the received signal is processed in two phase using successive interference cancellation technique.
In the first phase, the received signal from the source is considered as interference while considering the strong received signal from the relay as desired signal.
In the second phase, as the received signal from the relay becomes known, this part can be removed from the received signal which in turn reduces the total interference. The covariance of received collective interference-plus-noise signal at the destination for the first phase can be expressed as
{\small
\begin{equation}\label{cov_d}
\begin{aligned}
\Sigma_{\ted,\mathrm{1}}^\subc & \approx  \hat{\vh}_{\tesd}^\subc \left( \tilde{\vv}_{\tesr}^\subc p_{\tesr}^\subc (\tilde{\vv}_{\tesr} ^\subc)^H + \tilde{\vv}_{\tesd}^\subc p_{\tesd}^\subc (\tilde{\vv}_{\tesd} ^\subc)^H \right) (\hat{\vh}_{\tesd}^\subc)^H   \\ 
				  & + (\Sigchnlsd^\subc )^2 \left( \tr \left( \tilde{\vv}_{\tesr}^\subc p_{\tesr}^\subc (\tilde{\vv}_{\tesr} ^\subc)^H + \tilde{\vv}_{\tesd}^\subc p_{\tesd}^\subc (\tilde{\vv}_{\tesd} ^\subc)^H \right) \right)  \\ 
				  & + \hat{h}_\terd^\subc \kappa_\ter \underset{\subcm \in \subcset}{\sum} p_\terd^\subcm (\hat{h}_\terd^\subc)^{*} + (\Sigchnlrd^\subc )^2 \left( \kappa_\ter \underset{\subcm \in \subcset}{\sum} p_\terd^\subcm + p_\terd^\subc \right) \\
                  & + \hat{\vh}_\tesd^\subc \mathbf{\Theta}_{\tet,\tes} \underset{\subcm \in \subcset}{\sum} \bigg( \diag{ \tilde{\vv}_\tesr^\subcm p_{\tesr}^\subcm (\tilde{\vv}_\tesr^\subcm)^H} 
                   + \diag{\tilde{\vv}_\tesd^\subcm p_{\tesd}^\subcm (\tilde{\vv}_\tesd^\subcm)^H} \bigg) \\ 
                  & (\hat{\vh}_\tesd^\subc)^H + (\Sigchnlsd^\subc )^2 \tr \bigg( \mathbf{{\Theta}}_{\tet,\tes} \underset{\subcm \in \subcset}{\sum} \bigg(  \diag{ \tilde{\vv}_\tesr^\subcm p_{\tesr}^\subcm (\tilde{\vv}_\tesr^\subcm)^H} \\ 
                  &+ \diag{\tilde{\vv}_\tesd^\subcm p_{\tesd}^\subcm (\tilde{\vv}_\tesd^\subcm)^H} \bigg) \bigg) + \beta_\ted \underset{\subcm \in \subcset}{\sum} \bigg( \hat{\vh}_{\tesd}^\subcm \tilde{\vv}_{\tesr} ^\subcm  p_{\tesr}^\subcm (\tilde{\vv}_{\tesr} ^\subcm)^H (\hat{\vh}_{\tesd}^\subcm)^H \\ 
                  & +  \hat{\vh}_{\tesd}^\subcm \tilde{\vv}_{\tesd} ^\subcm  p_{\tesd}^\subcm (\tilde{\vv}_{\tesd} ^\subcm)^H (\hat{\vh}_{\tesd}^\subcm)^H + (\Sigchnlsd^\subcm )^2  \tr \bigg(  \tilde{\vv}_{\tesr} ^\subcm  p_{\tesr}^\subcm (\tilde{\vv}_{\tesr} ^\subcm)^H   \\ 
                  & +   \tilde{\vv}_{\tesd} ^\subcm  p_{\tesd}^\subcm (\tilde{\vv}_{\tesd} ^\subcm)^H  \bigg)  + \hat{h}_\terd^\subcm p_\terd^\subcm (\hat{h}_\terd^\subcm)^{*} + (\Sigchnlrd^\subcm )^2   p_\terd^\subcm + (\Signd^\subcm)^2 \bigg) \\
                  & + (\Signd^\subc )^2.
\end{aligned}
\end{equation}
}
Here the first two terms in the above equation \eqref{cov_d}, represent the co-channel interference. The next two terms corresponds the the relay transmit distortion at the destination. The third and forth terms represent the source transmit distortion at the destination. The next term correspond to the receive distortion at the destination. The last term is the thermal noise at the destination. 

For the second phase, the signal from the relay is known and it can be removed from the received signal. The signal from the source to destination becomes the desired signal. The covariance of received collective interference-plus-noise signal at the destination for the second phase can be obtained as 
\begin{equation}
\begin{aligned}
\Sigma_{\ted,\mathrm{2}}^\subc := \Sigma_{\ted,\mathrm{1}}^\subc  - \hat{\vh}_{\tesd}^\subc \left( \tilde{\vv}_{\tesd}^\subc p_{\tesd}^\subc (\tilde{\vv}_{\tesd} ^\subc)^H \right) (\hat{\vh}_{\tesd}^\subc)^H.
\end{aligned}
\end{equation}

\subsection{Achievable information rate}
The achievable information rate for the source to relay link using subcarrier $k$ can be obtained as
\begin{equation}
R_{\tesr}^{\subc}=\hspace{-1mm}  \gamma_0 \mathrm{log}_2 \hspace{-1mm}\left( \hspace{-1mm} 1 + \frac{|\hat{\vh}_{\tesr}^\subc \tilde{\vv}_{\tesr}^\subc |^2 p_{\tesr}^{\subc}}{  \alpha_{\ter}^\subc + \underset{m \in \mathbb{K}}{\sum}   \big( \gamma_{\tesr}^{\subc\subcm} p_{\tesr}^{\subcm} + \gamma_{\terd}^{\subc\subcm} p_{\terd}^{\subcm} +   \gamma_{\tesd}^{\subc\subcm} p_{\tesd}^{\subcm} \big)} \right)
\end{equation}
where
\begin{equation}
\begin{aligned}
\gamma_{\tesr}^{\subc\subcm}  = & \delta_{km} (\Sigchnlsr^\subcm)^2 + \hat{\vh}_\tesr^\subc \TxdisMat \diag{ \tilde{\vv}_\tesr^\subcm (\tilde{\vv}_\tesr^\subcm)^H} (\hat{\vh}_\tesr^\subc)^H  \\
								& + (\Sigchnlsr^\subc )^2 \tr \bigg( \TxdisMat  \diag{ \tilde{\vv}_\tesr^\subcm (\tilde{\vv}_\tesr^\subcm)^H}\bigg)  \\ 
								& + \beta_\ter \bigg( \hat{\vh}_{\tesr}^\subcm \tilde{\vv}_{\tesr} ^\subcm  p_{\tesr}^\subcm (\tilde{\vv}_{\tesr} ^\subcm)^H (\hat{\vh}_{\tesr}^\subcm)^H +   (\Sigchnlsr^\subcm)^2 \bigg) ,
\end{aligned}
\end{equation}
\begin{equation}
\begin{aligned}
\gamma_{\terd}^{\subc\subcm}  = & \delta_{km} (\Sigchnlrr^\subcm)^2 +  \kappa_\ter  \big( \hat{h}_\terr^\subc (\hat{h}_\terr^\subc)^* + (\Sigchnlrr^\subc)^2 \big) \\
                                & + \beta_\ter \big( \hat{h}_\terr^\subc (\hat{h}_\terr^\subcm)^* + (\Sigchnlrr^\subcm)^2 \big) ,
\end{aligned}
\end{equation}
\begin{equation}
\begin{aligned}
\gamma_{\tesd}^{\subc\subcm}  = &  \delta_{km} \bigg( \hat{\vh}_{\tesr}^\subcm \tilde{\vv}_{\tesd} ^\subcm (\tilde{\vv}_{\tesd} ^\subcm)^H (\hat{\vh}_{\tesr}^\subcm)^H  + (\Sigchnlsr^\subcm )^2 \bigg) \\ 
                  & + \hat{\vh}_\tesr^\subc \mathbf{{\Theta}}_{\tet,\tes} \diag{\tilde{\vv}_\tesd^\subcm (\tilde{\vv}_\tesd^\subcm)^H} (\hat{\vh}_\tesr^\subc)^H  \\ 
                  & + (\Sigchnlsr^\subc )^2 \tr \bigg( \mathbf{{\Theta}}_{\tet,\tes} \diag{\tilde{\vv}_\tesd^\subcm (\tilde{\vv}_\tesd^\subcm)^H} \bigg) \\ 
                  & + \beta_\ter \bigg(  \hat{\vh}_{\tesr}^\subcm \tilde{\vv}_{\tesd} ^\subcm  p_{\tesd}^\subcm (\tilde{\vv}_{\tesd} ^\subcm)^H (\hat{\vh}_{\tesr}^\subcm)^H + (\Sigchnlsr^\subcm )^2 \bigg) ,
\end{aligned}
\end{equation}
\begin{equation}
\begin{aligned}
\alpha_{\ter}^\subc  =  (\Signr^\subc)^2 + \beta_\ter  \underset{\subcm \in \subcset}{\sum} (\Signr^\subcm)^2 ,
\end{aligned}
\end{equation}
and $\delta_{km} = 1$ when $k=m$ and otherwise $\delta_{km} =  0$.

Similarly, the achievable information rate for the relay to destination link using subcarrier $\subc$ can be expressed as

\begin{equation}
R_{\terd}^{\subc} =\hspace{-1mm}  \gamma_0 \mathrm{log}_2 \hspace{-1mm}\left( \hspace{-1mm} 1 + \frac{|\hat{h}_{\terd}^\subc |^2 p_{\terd}^{\subc}}{  \alpha_{\ted}^\subc + \underset{m \in \mathbb{K}}{\sum}   \big( \bar{\gamma}_{\tesr}^{\subc\subcm} p_{\tesr}^{\subcm} + \bar{\gamma}_{\terd}^{\subc\subcm} p_{\terd}^{\subcm} +   \bar{\gamma}_{\tesd}^{\subc\subcm} p_{\tesd}^{\subcm} \big)} \right)
\end{equation}
where
\begin{equation}
\begin{aligned}
\bar{\gamma}_{\tesr}^{\subc\subcm}  = & \delta_{\subc\subcm} \bigg(\hat{\vh}_{\tesd}^\subcm \tilde{\vv}_{\tesr}^\subcm (\tilde{\vv}_{\tesr} ^\subcm)^H (\hat{\vh}_{\tesd}^\subcm)^H +(\Sigchnlsd^\subcm )^2 \bigg)  \\
                                      & + \hat{\vh}_\tesd^\subc \mathbf{\Theta}_{\tet,\tes} \diag{ \tilde{\vv}_\tesr^\subcm (\tilde{\vv}_\tesr^\subcm)^H}  (\hat{\vh}_\tesd^\subc)^H \\ 
                                      & + (\Sigchnlsd^\subc )^2 \tr \bigg( \mathbf{{\Theta}}_{\tet,\tes}  \diag{ \tilde{\vv}_\tesr^\subcm (\tilde{\vv}_\tesr^\subcm)^H}  \bigg)  \\  
                                      & + \beta_\ted \bigg( \hat{\vh}_{\tesd}^\subcm \tilde{\vv}_{\tesr} ^\subcm  (\tilde{\vv}_{\tesr} ^\subcm)^H (\hat{\vh}_{\tesd}^\subcm)^H + (\Sigchnlsd^\subcm )^2   \bigg)
\end{aligned}
\end{equation}
\begin{equation}
\begin{aligned}
\bar{\gamma}_{\terd}^{\subc\subcm}  =  &  \delta_{\subc\subcm}(\Sigchnlrd^\subcm )^2 + \kappa_\ter \left( \hat{h}_\terd^\subc (\hat{h}_\terd^\subc)^{*} + (\Sigchnlrd^\subc )^2  \right) \\
                  & + \beta_\ted  \bigg( \hat{h}_\terd^\subcm (\hat{h}_\terd^\subcm)^{*} + (\Sigchnlrd^\subcm )^2   \bigg) 
\end{aligned}
\end{equation}
\begin{equation}
\begin{aligned}
\bar{\gamma}_{\tesd}^{\subc\subcm}  = & \delta_{\subc\subcm} \left(\hat{\vh}_{\tesd}^\subcm \tilde{\vv}_{\tesd}^\subcm (\tilde{\vv}_{\tesd}^\subcm)^H (\hat{\vh}_{\tesd}^\subcm)^H + (\Sigchnlsd^\subcm)^2 \right)\\
                                      & + \hat{\vh}_\tesd^\subc \mathbf{\Theta}_{\tet,\tes} \diag{\tilde{\vv}_\tesd^\subcm  (\tilde{\vv}_\tesd^\subcm)^H} (\hat{\vh}_\tesd^\subc)^H  \\ 
                                      & + (\Sigchnlsd^\subc )^2 \tr \bigg( \mathbf{{\Theta}}_{\tet,\tes}  \diag{\tilde{\vv}_\tesd^\subcm (\tilde{\vv}_\tesd^\subcm)^H} \bigg) \\ 
                                      & + \beta_\ted  \bigg(  \hat{\vh}_{\tesd}^\subcm \tilde{\vv}_{\tesd} ^\subcm  (\tilde{\vv}_{\tesd} ^\subcm)^H (\hat{\vh}_{\tesd}^\subcm)^H + (\Sigchnlsd^\subcm )^2    \bigg)  
\end{aligned}
\end{equation}
\begin{equation}
\begin{aligned}
\alpha_{\ter}^\subc  =  (\Signr^\subc)^2 + \beta_\ter  \underset{\subcm \in \subcset}{\sum} (\Signr^\subcm)^2 ,
\end{aligned}
\end{equation}
Finally, the achievable information rate for the source to destination link using subcarrier $\subc$ can be formulated as
\begin{equation}
R_{\tesd}^{\subc} =\hspace{-1mm}  \gamma_0 \mathrm{log}_2 \hspace{-1mm}\left( \hspace{-1mm} 1 + \frac{|\hat{\vh}_{\tesd}^\subc \tilde{\vv}_{\tesd}^\subc  |^2 p_{\tesd}^{\subc}}{  \alpha_{\ted}^\subc + \underset{m \in \mathbb{K}}{\sum}   \big( \bar{\gamma}_{\tesr}^{\subc\subcm} p_{\tesr}^{\subcm} + \bar{\gamma}_{\terd}^{\subc\subcm} p_{\terd}^{\subcm} +   \tilde{\gamma}_{\tesd}^{\subc\subcm} p_{\tesd}^{\subcm} \big)} \right)
\end{equation}
where
\begin{equation}
\begin{aligned}
\tilde{\gamma}_{\tesd}^{\subc\subcm}  = &   \bar{\gamma}_{\tesd}^{\subc\subcm} - \delta_{km} \hat{\vh}_{\tesd}^\subcm \tilde{\vv}_{\tesd} ^\subcm (\tilde{\vv}_{\tesd} ^\subcm)^H (\hat{\vh}_{\tesd}^\subcm)^H  .
\end{aligned}
\end{equation}
In this work, we consider that the source node or BS has a large antenna array.
Therefore, different well-studied linear precoder-decoder filtering strategies, such as, maximum ratio transmission/maximum ratio combining (MRT/MRC), zero forcing (ZF), MMSE and so on, are available for the selection of transmit precoders and receive decoders at the source.
We can also reduce some computational complexity to obtain the achievable rates if we assume some common assumptions from mMIMO studies on the channel covariance matrices such as Hermitian, Teoplitz, etc., as discussed in \cite{XXiDZhKXuWMaYXu15}.
Now, the total achievable information rate for the system can be  written as
\begin{equation}
R^{\subc}=R_{\tesd}^{\subc}+ \min \{ R_{\tesr}^{\subc},R_{\terd}^{\subc} \}.
\end{equation}

\section{Optimization Problem}\label{OP}
In this section, we present the joint sub-carrier and power allocation optimization problem to maximize spectral efficiency in terms of total sum-rate under transmit power constraints. The node is not transmitting or receiving in particular sub-carrier if the power allocated to a particular sub-carrier is zero, thereby in-cooperating the sub-carrier allocation into the power allocation problem.

\subsection{Sum Rate Maximization}
The sum rate maximization problem for our system can be expressed as 
\begin{equation}
\begin{aligned}
\underset{\underset{p_{\terd}^{\subc}> 0}{p_{\tesd}^{\subc}> 0, p_{\tesr}^{\subc}> 0}}{\mathrm{max}} & \; \;  \underset{\subc \in \subcset}{\sum} R^\subc \\   
 \mathrm{subject \; to} \quad & \underset{\subc \in \subcset}{\sum} p_{\terd}^{\subc}  \leq P_{\ter},  \quad  \underset{\subc \in \subcset}{\sum} p_{\tesd}^{\subc} + p_{\tesr}^{\subc} \leq P_{\tes},
\end{aligned}
\end{equation}
where $P_{\tes}$ and $P_{\ter}$ are the available transmit power at the source node and the relay, respectively. The above optimization problem can be rewritten as
\begin{equation} \label{opt_rate}
\begin{aligned}
\underset{\underset{p_{\terd}^{\subc}> 0}{p_{\tesd}^{\subc}> 0, p_{\tesr}^{\subc}> 0}}{\mathrm{max}} & \; \;  \underset{\subc \in \subcset}{\sum} R_{\tesd}^{\subc} + t \\   
 \mathrm{subject \; to} \quad & R_{\tesr}^{\subc} \geq t , \quad R_{\terd}^{\subc} \geq t, \\
& \underset{\subc \in \subcset}{\sum} p_{\terd}^{\subc}  \leq P_{\ter}, \quad \underset{\subc \in \subcset}{\sum} p_{\tesd}^{\subc} + p_{\tesr}^{\subc} \leq P_{\tes}.
\end{aligned}
\end{equation}
The above optimization problem (\ref{opt_rate}) belongs to the class of smooth difference-of-convex optimization problems.
We propose an iterative algorithm which reaches a converging point that satisfies the KKT optimality conditions using the SIA framework \cite{SIAMBaWrGoP78}. 

Now, we use Taylor's approximation on the concave terms of rate to obtain a lower bound.
For the approximation, we first select $p_{\terd,0}^{\subc}, p_{\tesd,0}^{\subc}$ and $ p_{\tesr,0}^{\subc}$ as a feasible transmit power value for the relay-destination, source-relay, and source-destination link, respectively. 
A lower-bound of $ R_{\tesr}^{\subc}$, after applying Taylor's approximation on the concave terms, can be expressed as 
{\small
\begin{equation}\label{eq_R_sr}
\begin{aligned}
& R_{\tesr}^{\subc}  \geq  \gamma_0 \mathrm{log}_2 \bigg( \hspace{-1mm}  \alpha_{\ter}^\subc + \hspace{-1mm}  \underset{m \in \mathbb{K}}{\sum} \hspace{-1mm}    \big( \bar{\gamma}_{\tesr}^{\subc\subcm} p_{\tesr}^{\subcm} + \hspace{-1mm}   {\gamma}_{\terd}^{\subc\subcm} p_{\terd}^{\subcm} + \hspace{-1mm}    {\gamma}_{\tesd}^{\subc\subcm} p_{\tesd}^{\subcm} + \hspace{-1mm}  |\hat{\vh}_{\tesr}^\subc  \tilde{\vv}_{\tesr}^\subc  |^2 p_{\tesr}^{\subc} \big) \hspace{-1mm}  \bigg) \\ 
& - \gamma_0 \mathrm{log}_2 \bigg( \alpha_{\ter}^\subc + \underset{m \in \mathbb{K}}{\sum}   \big(  {\gamma}_{\tesr}^{\subc\subcm} p_{\tesr,0}^{\subcm} +  {\gamma}_{\terd}^{\subc\subcm} p_{\terd,0}^{\subcm} +    {\gamma}_{\tesd}^{\subc\subcm} p_{\tesd,0}^{\subcm} \big) \bigg )  \\ 
& -\hspace{-1mm} \frac{\gamma_0 \underset{m \in \mathbb{K}}{\sum} \hspace{-0.85mm}  \bigg(  {\gamma}_{\tesr}^{\subc\subcm} \hspace{-1mm} \left( \hspace{-0.25mm}  p_{\tesr}^{\subcm} - p_{\tesr,0}^{\subcm} \right) \hspace{-1mm} + \hspace{-1mm}  {\gamma}_{\terd}^{\subc\subcm} \hspace{-1mm} \left(  p_{\terd}^{\subcm} - p_{\terd,0}^{\subcm}\right)\hspace{-1mm} +\hspace{-1mm}    {\gamma}_{\tesd}^{\subc\subcm} \left( p_{\tesd}^{\subcm} - p_{\tesd,0}^{\subcm}\right) \bigg)}{  \mathrm{log}(2) \bigg( \alpha_{\ter}^\subc + \underset{m \in \mathbb{K}}{\sum}   \big(  {\gamma}_{\tesr}^{\subc\subcm} p_{\tesr}^{\subcm} +  {\gamma}_{\terd}^{\subc\subcm} p_{\terd}^{\subcm} +    {\gamma}_{\tesd}^{\subc\subcm} p_{\tesd}^{\subcm} \big) \bigg) }   \\ & =: \bar{R}_{\tesr}^{\subc}. 
\end{aligned}
\end{equation}
}Similarly, after applying Taylor's approximation, the lower bound of $R_{\tesd}^{\subc}$ and $R_{\terd}^{\subc}$ can be obtained as
{\small
\begin{equation}\label{eq_R_sd}
\begin{aligned}
& R_{\tesd}^{\subc}  \geq  \gamma_0 \mathrm{log}_2 \bigg( \hspace{-1mm}  \alpha_{\ted}^\subc + \hspace{-1mm}  \underset{m \in \mathbb{K}}{\sum} \hspace{-1mm}    \big( \bar{\gamma}_{\tesr}^{\subc\subcm} p_{\tesr}^{\subcm} + \hspace{-1mm}  \bar{\gamma}_{\terd}^{\subc\subcm} p_{\terd}^{\subcm} + \hspace{-1mm}   \tilde{\gamma}_{\tesd}^{\subc\subcm} p_{\tesd}^{\subcm} + \hspace{-1mm}  |\hat{\vh}_{\tesd}^\subc \tilde{\vv}_{\tesd}^\subc  |^2 p_{\tesd}^{\subc} \big) \hspace{-1mm}  \bigg) \\ 
& - \gamma_0 \mathrm{log}_2 \bigg( \alpha_{\ted}^\subc + \underset{m \in \mathbb{K}}{\sum}   \big( \bar{\gamma}_{\tesr}^{\subc\subcm} p_{\tesr,0}^{\subcm} + \bar{\gamma}_{\terd}^{\subc\subcm} p_{\terd,0}^{\subcm} +   \tilde{\gamma}_{\tesd}^{\subc\subcm} p_{\tesd,0}^{\subcm} \big) \bigg )  \\ 
& -\hspace{-1mm} \frac{\gamma_0 \underset{m \in \mathbb{K}}{\sum} \hspace{-0.85mm}  \bigg( \bar{\gamma}_{\tesr}^{\subc\subcm} \hspace{-1mm} \left( \hspace{-0.25mm}  p_{\tesr}^{\subcm} - p_{\tesr,0}^{\subcm} \right) \hspace{-1mm} + \hspace{-1mm} \bar{\gamma}_{\terd}^{\subc\subcm} \hspace{-1mm} \left(  p_{\terd}^{\subcm} - p_{\terd,0}^{\subcm}\right)\hspace{-1mm} +\hspace{-1mm}   \tilde{\gamma}_{\tesd}^{\subc\subcm} \left( p_{\tesd}^{\subcm} - p_{\tesd,0}^{\subcm}\right) \bigg)}{  \mathrm{log}(2) \bigg( \alpha_{\ted}^\subc + \underset{m \in \mathbb{K}}{\sum}   \big( \bar{\gamma}_{\tesr}^{\subc\subcm} p_{\tesr}^{\subcm} + \bar{\gamma}_{\terd}^{\subc\subcm} p_{\terd}^{\subcm} + \hspace{-1mm}  \tilde{\gamma}_{\tesd}^{\subc\subcm} p_{\tesd}^{\subcm} \big) \bigg) }   \\ & =: \bar{R}_{\tesd}^{\subc} 
\end{aligned}
\end{equation}
}
and
{\small
\begin{equation}\label{eq_R_rd}
\begin{aligned}
& R_{\terd}^{\subc}  \geq  \gamma_0 \mathrm{log}_2 \bigg( \hspace{-1mm}  \alpha_{\ted}^\subc + \hspace{-1mm}  \underset{m \in \mathbb{K}}{\sum} \hspace{-1mm}    \big( \bar{\gamma}_{\tesr}^{\subc\subcm} p_{\tesr}^{\subcm} + \hspace{-1mm}  \bar{\gamma}_{\terd}^{\subc\subcm} p_{\terd}^{\subcm} + \hspace{-1mm}   \bar {\gamma}_{\tesd}^{\subc\subcm} p_{\tesd}^{\subcm} + \hspace{-1mm}  |\hat{h}_{\terd}^\subc |^2 p_{\terd}^{\subc} \big) \hspace{-1mm}  \bigg) \\ 
& - \gamma_0 \mathrm{log}_2 \bigg( \alpha_{\ted}^\subc + \underset{m \in \mathbb{K}}{\sum}   \big( \bar{\gamma}_{\tesr}^{\subc\subcm} p_{\tesr,0}^{\subcm} + \bar{\gamma}_{\terd}^{\subc\subcm} p_{\terd,0}^{\subcm} +   \bar {\gamma}_{\tesd}^{\subc\subcm} p_{\tesd,0}^{\subcm} \big) \bigg )  \\ 
& -\hspace{-1mm} \frac{\gamma_0 \underset{m \in \mathbb{K}}{\sum} \hspace{-0.85mm}  \bigg( \bar{\gamma}_{\tesr}^{\subc\subcm} \hspace{-1mm} \left( \hspace{-0.25mm}  p_{\tesr}^{\subcm} - p_{\tesr,0}^{\subcm} \right) \hspace{-1mm} + \hspace{-1mm} \bar{\gamma}_{\terd}^{\subc\subcm} \hspace{-1mm} \left(  p_{\terd}^{\subcm} - p_{\terd,0}^{\subcm}\right)\hspace{-1mm} +\hspace{-1mm}   \bar {\gamma}_{\tesd}^{\subc\subcm} \left( p_{\tesd}^{\subcm} - p_{\tesd,0}^{\subcm}\right) \bigg)}{  \mathrm{log}(2) \bigg( \alpha_{\ted}^\subc + \underset{m \in \mathbb{K}}{\sum}   \big( \bar{\gamma}_{\tesr}^{\subc\subcm} p_{\tesr}^{\subcm} + \bar{\gamma}_{\terd}^{\subc\subcm} p_{\terd}^{\subcm} +   \bar {\gamma}_{\tesd}^{\subc\subcm} p_{\tesd}^{\subcm} \big) \bigg) }   \\ & =: \bar{R}_{\terd}^{\subc}, 
\end{aligned}
\end{equation}
}respectively.
Using this approximation, we can rewrite the optimization problem as
\begin{equation} \label{opt_rate_lb}
\begin{aligned}
\underset{\underset{p_{\terd}^{\subc}> 0}{p_{\tesd}^{\subc}> 0, p_{\tesr}^{\subc}> 0}}{\mathrm{max}} & \; \;  \underset{\subc \in \subcset}{\sum} \bar{R}_{\tesd}^{\subc} + t \\   
 \mathrm{subject \; to} \quad & \bar{R}_{\tesr}^{\subc} \geq t , \quad \bar{R}_{\terd}^{\subc} \geq t, \\
& \underset{\subc \in \subcset}{\sum} p_{\terd}^{\subc}  \leq P_{\ter}, \quad \underset{\subc \in \subcset}{\sum} p_{\tesd}^{\subc} + p_{\tesr}^{\subc} \leq P_{\tes}.
\end{aligned}
\end{equation}
Here, the objective of the above convex optimization problem $ \bar{R}^{\subc} :=  \bar{R}_{\tesd}^{\subc} + t$ is a jointly concave function over $p_{\terd}^{\subc}, p_{\tesd}^{\subc}$ and $ p_{\tesr}^{\subc}$. We propose an iterative algorithm, where for each iterative update, we now solve the above convex optimization problem to optimality.
The iterative update is continued until a stable point is reached.
Furthermore, we use a first order Taylor approximation on a smooth convex function, we can conclude that  $ \bar{R}^{\subc}$ represents a global and tight lower bound to $ {R}^{\subc}$, with a shared slope at the point of approximation \cite{BStVLi04}.
The solution can achieve a convergence point that satisfies KKT conditions since the proposed iterative update also fulfills the requirements set in \cite[Theorem 1]{SIAMBaWrGoP78}. The algorithm \ref{Alg_1} provides a detailed procedure of the algorithm.
\begin{algorithm}
\caption{$\text{\small{For sum rate maximization}}$}
\label{Alg_1}
 \begin{algorithmic}[1]
\STATE {$a \gets 0$ (set iteration number to zero)}
\STATE {$p_{\terd,0}^{\subc}, p_{\tesd,0}^{\subc},  p_{\tesr,0}^{\subc}  \gets \text{\small{uniform (equal) power  initialization}}$ }
\REPEAT
 \STATE{$a \gets a + 1$}
 \STATE{$p_{\terd}^{\subc}, p_{\tesd}^{\subc},  p_{\tesr}^{\subc} \gets  \text{\small{solve}} $ \eqref{opt_rate_lb} }
 \STATE{$p_{\terd,0}^{\subc}, p_{\tesd,0}^{\subc},  p_{\tesr,0}^{\subc}  \gets p_{\terd,0}^{\subc}\hspace{-1mm} = \hspace{-1mm} p_{\terd}^{\subc}, p_{\tesd,0}^{\subc}\hspace{-1mm} = \hspace{-1mm} p_{\tesd}^{\subc} $  $\text{\small{and}}$ $ p_{\tesr,0}^{\subc}\hspace{-1mm} = \hspace{-1mm} p_{\tesr}^{\subc} $, $\text{\small{respectively}}$ }
 \UNTIL{\text{\small{a stable point, or maximum number of $a$ reached}}}
 \RETURN { $\{  p_{\terd}^{\subc}, p_{\tesd}^{\subc},  p_{\tesr}^{\subc}  \}$ } 
 \end{algorithmic}
\end{algorithm}

\section{SIMULATION RESULTS}\label{SR}
 
By using numerical simulations, we evaluate the performance of the proposed algorithms (RS) introduced in Section \ref{OP} and compare with other benchmarks/scenarios, such as the no-distortion (RS-ND) algorithm, where the hardware distortions are not considered, Only Direct link (ODL) is present, only relay link (ORL) is available, and half-duplex (HD) algorithm.
For the simulations, we consider the MRT/MRC strategy for our transmit precoder and receive filters at the BS.
All communication channels follow an uncorrelated Rayleigh flat fading model.
The self-interference channel follows the characterization reported in \cite{MDuCDiASa12}, i.e., ${h}_{rr} \sim  \mathcal{CN} \left(\sqrt{\frac{\rho_{si}K_{\rm{R}}}{1+K_{\rm{R}}}} ,\frac{\rho_{si}}{1+K_{\rm{R}}}  \right)$, where $ \rho_{si} $ is the self-interference channel strength, and $K_R$ is the Rician coefficient.
The overall system performance is then averaged over 100 channel realizations.
During our simulations, the following values are used to define the default setup: $|\mathbb{K}|=4$ , $ K_{\rm{R}_{\rm{BS}}} = 10 $ $ N_{\rm{BS}} = 32 $, $\rho =\rho_{sr}=\rho_{rd} = -10 dB$, $\rho_{sd}=-40 dB $, $ \rho_{si} = 1 $, $ \sigma_n^2 =(\sigma_{n_r}^k)^2 =(\sigma_{n_d}^k)^2 = -40 dB $, $ \sigma_{e}^2 =(\sigma_{e,sr}^k)^2 =(\sigma_{e,rd}^k)^2 =(\sigma_{e,sd}^k)^2= (\sigma_{e,rr}^k)^2 = -50 dB $, $ P_S = P_R = 1 $, $ \kappa = \beta =-30dB $ where $ {\mathbf{\Theta}_{\rm{t},S}} = \kappa \mathbf{I}_{N} $. 


\begin{figure}[htbp]
\centerline{\includegraphics[width=\columnwidth]{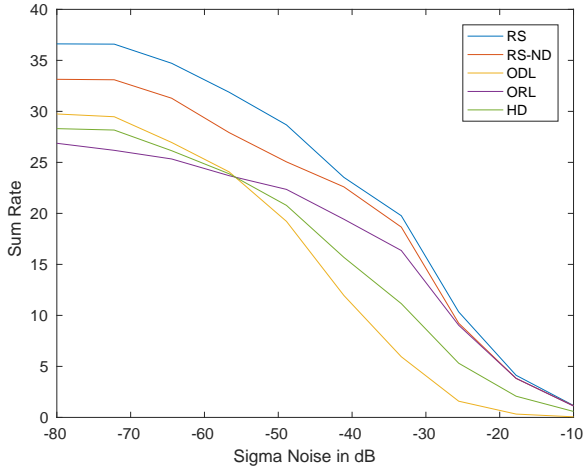}}
\caption{Sum rate  vs. noise}
\label{fig1}
\end{figure}

\begin{figure}[htbp]
\centerline{\includegraphics[width=\columnwidth]{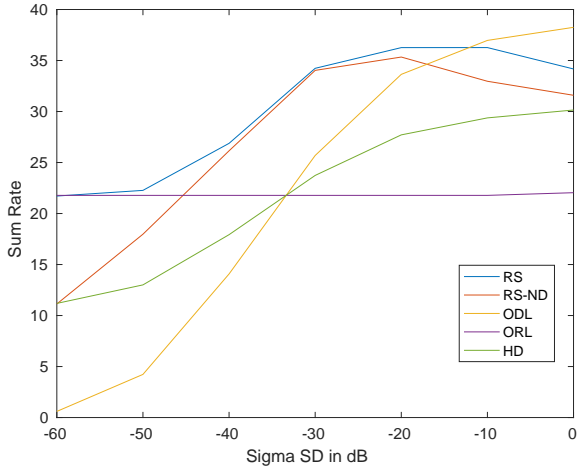}}
\caption{Sum rate  vs. strength of direct channel}
\label{fig2}
\end{figure}


In Fig. \ref{fig2}, the performance of our proposed algorithm  is plotted in terms of total sum rate, for different values of noise at the destination and relay ($\sigma_n^2 $).
As we expect, the sum rate decreases as the noise increases, i.e., lower sum rate for higher noise values.
It is clear that the proposed algorithm outperforms all the other benchmarks. 

In Fig. \ref{fig2}, the performance of the algorithm 1 in terms of system sum rate is evaluated against the strength of the DL.
It can be clearly observed that as the strength of DL increases, the system performance increase.
While for higher $\rho_{sd}$, our proposed algorithm performs worse compared to ODL case.
This is because, destination assumes that the signal received from the relay link is strong, and tries to decode the relay link first in successive interference cancellation method with strong interference from DL.

\section{CONCLUSIONS}\label{CON}
In this paper, we addressed a joint sub-carrier and power allocation problem for a DF system, where a full duplex (FD) massive multiple-input-multiple-output (mMIMO) multi-carrier (MC) base station (BS) node  communicates with a FD MC single antenna node using direct link as well as relay link utilizing the RS approach.
The destination employs a successive interference cancellation approach to decode the received signals.
For modelling the system, the impact of hardware distortions resulting in residual self-interference and ICL, and also imperfect CSI are considered. An iterative optimization method is proposed, to solve joint sub-carrier and power allocation problem to maximize the total sum rate maximization, which follows successive inner approximation (SIA) framework to reach the convergence point that satisfies the KKT conditions. Numerical results show that our rate splitting approach performs better compared to non-rate splitting, and half duplex schemes, especially for high SNR scenarios as well as for weak direct link.

\bibliographystyle{IEEEtran} 
\bibliography{main}


\end{document}